\documentclass{article}

\usepackage[margin=1in]{geometry}
\usepackage{amsmath,amssymb,amsthm,mathtools}
\usepackage{algorithm}
\usepackage{algpseudocode}
\usepackage{booktabs}
\usepackage{enumitem}
\usepackage[hidelinks]{hyperref}
\usepackage[font=small,labelfont=bf]{caption}

\newtheorem{theorem}{Theorem}
\newtheorem{lemma}[theorem]{Lemma}
\newtheorem{corollary}[theorem]{Corollary}
\newtheorem{definition}[theorem]{Definition}

\newcommand{\OPT}{\operatorname{OPT}}
\newcommand{\TMEC}{\textsc{TMEC}}
\newcommand{\TMNC}{\textsc{TMNC}}
\newcommand{\R}{\mathbb{R}}

\title{Threshold Minimum Cut with Terminal Quotas: Logarithmic and Planar Approximation Algorithms}
\author{Qi Duan, Carnegie Mellon University}
\date{}

\begin{document}

\maketitle

\begin{abstract}
We study threshold minimum cut problems with a distinguished root vertex, a set
of terminals, and a quota. In the threshold minimum edge cut problem
(\TMEC), the goal is to find a minimum-cost edge cut that disconnects at least
$k$ terminals from the root. In the threshold minimum node cut problem
(\TMNC), the goal is to delete a minimum-cost set of nonterminal, nonroot
vertices so that at least $k$ terminals become disconnected from the root. We
prove three approximation guarantees. First, undirected general-graph \TMEC{}
admits a randomized polynomial-time expected $O(\log n)$ approximation via a
R\"acke-style cut-dominating tree decomposition and an exact dynamic program on
trees. A standard repetition argument gives the same asymptotic ratio with high
probability. Second, planar \TMEC{} admits a factor-$2$ approximation by reducing
the threshold condition to planar weighted balanced cut. Third, bounded-degree
planar \TMNC{} admits a $2\Delta$-approximation, where $\Delta$ is the maximum
degree of a deletable vertex, by reducing the node-cost problem to the planar
edge-cut problem on the same graph. The results separate exact-quota guarantees
from bicriteria small-set-expansion-type guarantees and identify the
unbounded-degree planar node-cut case as the main remaining obstacle.
\end{abstract}

\section{Introduction}

Cut problems are among the central objects in graph algorithms and network
optimization. Classical minimum $s$--$t$ cut asks for the cheapest edge or
vertex set separating one specified terminal from another. Many applications,
however, require a more aggregated form of separation: one wants to isolate not
a single target but a sufficiently large number of targets from a protected
root. This motivates threshold minimum cut problems.

In this paper we study rooted threshold cut problems. The input consists of an
undirected graph $G=(V,E)$, a root vertex $s$, a terminal set
$T\subseteq V\setminus\{s\}$, and a quota $k$ with $1\le k\le |T|$. A feasible
solution must disconnect at least $k$ terminals from the root. We consider both
edge and node deletion. In the edge version, the solution is an edge boundary
$\delta(X)$ of a set $X$ containing at least $k$ terminals and not containing
$s$. In the node version, the solution is an outer vertex boundary $N(X)$, and
$s$ and all terminals are undeletable.

These problems are natural in several settings. In network reliability, one may
ask for the minimum intervention that isolates a threshold number of vulnerable
hosts from a protected source. In image segmentation, planar grid-like graphs
often encode pixels or superpixels, and quota constraints can express the need
to separate a sufficiently large region. In cyber-security attack graphs, a
node cut corresponds to blocking or mitigating actions represented by attack
states, and a quota can model disconnecting a sufficient number of harmful
terminal states from an initial compromise state.

\paragraph{Our results.}
We prove the following approximation guarantees.
\begin{enumerate}[leftmargin=*]
    \item For undirected general graphs, \TMEC{} admits a randomized
    polynomial-time expected $O(\log n)$ approximation. The algorithm samples a
    cut-dominating decomposition tree, solves the threshold problem exactly on
    the tree by dynamic programming, and maps the resulting tree cut back to
    the original graph.
    \item For planar graphs, \TMEC{} admits a factor-$2$ approximation. The
    reduction assigns weight $2|T|$ to the root, weight $1$ to every terminal,
    and weight $0$ to all other vertices. A $b$-balanced cut with
    $b=k/(3|T|)$ is then exactly strong enough to force at least $k$ terminals
    outside the root component.
    \item For planar graphs in which every deletable vertex has degree at most
    $\Delta$, \TMNC{} admits a $2\Delta$-approximation. The proof assigns each
    edge $\{u,v\}$ the cost $\min\{c(u),c(v)\}$, applies the planar \TMEC{}
    approximation, and converts the returned edge cut back into a vertex
    separator by covering every crossing edge with a cheapest deletable
    endpoint.
\end{enumerate}

The edge-cut results preserve the quota exactly: the returned solution always
disconnects at least $k$ terminals. This distinction is important because
small-set-expansion-type algorithms often give stronger-looking bounds only in
a bicriteria sense, for example by returning a cut that separates
$\Omega(k)$ terminals rather than at least $k$ terminals.

\section{Related Work}

Balanced separator and sparsest-cut approximations are classical tools in graph
partitioning. Leighton and Rao gave logarithmic approximations based on
multicommodity flow and metric embeddings~\cite{LeightonRao1999}. Arora, Rao,
and Vazirani improved the approximation ratio for sparsest cut and balanced
separator to $O(\sqrt{\log n})$ using semidefinite programming and geometric
rounding~\cite{AroraRaoVazirani2009}. These results are closely related in
spirit but do not directly enforce the exact rooted terminal quota considered
here.

R\"acke introduced hierarchical decompositions for congestion minimization and
oblivious routing, giving logarithmic-loss tree representations of general
graphs~\cite{Racke2008}. We use a standard cut-dominating formulation of this
line of work as a black box: each sampled tree dominates every graph cut, and
the expected tree value of every graph cut is within an $O(\log n)$ factor.
This makes it possible to reduce exact-quota edge-cut objectives on general
graphs to exact dynamic programming on trees.

For planar graphs, Garg, Saran, and Vazirani gave a factor-$2$ approximation
for minimum-cost $b$-balanced edge cut when $b\le 1/3$, assuming unary vertex
weights; with binary weights they give a pseudoapproximation
algorithm~\cite{GargSaranVazirani1999}. Our planar \TMEC{} result is a direct
quota-preserving reduction to this theorem.

Small-set expansion and min--max graph partitioning provide another family of
methods for small unbalanced cuts. Bansal et al. give bicriteria algorithms
with guarantees of the form $O(\sqrt{\log n\log(1/\rho)})$ for small-set
expansion and related partitioning problems~\cite{BansalEtAl2014}. These tools
suggest possible bicriteria improvements for threshold cuts, but the distinction
between separating $\Omega(k)$ terminals and separating at least $k$ terminals
is essential for the exact-feasibility results in this paper.

\section{Preliminaries}
\label{sec:preliminaries}

Let $G=(V,E)$ be an undirected graph. For a set $X\subseteq V$, its edge
boundary is
\[
\delta_G(X)
=
\{\{u,v\}\in E:u\in X,\ v\notin X\}.
\]
If the graph is clear from context, we write $\delta(X)$ instead of
$\delta_G(X)$. If edges have nonnegative costs $c:E\to \R_{\ge 0}$, then
\[
c_G(\delta_G(X))
=
\sum_{e\in \delta_G(X)} c(e).
\]
For vertex-deletion problems, the outer vertex boundary of $X$ is
\[
N_G(X)
=
\{v\in V\setminus X:\text{ there exists }u\in X\text{ with }\{u,v\}\in E\}.
\]
Again, we write $N(X)$ when the graph is clear.

\begin{definition}[Threshold minimum edge cut]
Given an undirected graph $G=(V,E)$ with nonnegative edge costs
$c:E\to \R_{\ge 0}$, a root $s\in V$, a terminal set
$T\subseteq V\setminus\{s\}$, and a quota $k$ with $1\le k\le |T|$, the
threshold minimum edge cut problem is
\[
\OPT_E(G,s,T,k)
=
\min
\left\{
    c_G(\delta_G(X)):
    X\subseteq V,\ s\notin X,\ |X\cap T|\ge k
\right\}.
\]
We abbreviate this problem as \TMEC.
\end{definition}

\begin{definition}[Threshold minimum node cut]
Given an undirected graph $G=(V,E)$ with nonnegative vertex deletion costs
$c:V\to \R_{\ge 0}\cup\{+\infty\}$, a root $s\in V$, a terminal set
$T\subseteq V\setminus\{s\}$, and a quota $k$ with $1\le k\le |T|$, the
threshold minimum node cut problem is
\[
\OPT_N(G,s,T,k)
=
\min
\left\{
    c(N_G(X)):
    X\subseteq V,\ s\notin X,\ |X\cap T|\ge k
\right\},
\]
where
\[
c(N_G(X))=\sum_{v\in N_G(X)}c(v).
\]
The root $s$ and terminals $T$ are undeletable, which we model by assigning
$c(s)=+\infty$ and $c(t)=+\infty$ for every $t\in T$. We abbreviate this problem
as \TMNC.
\end{definition}

\section{General-Graph Threshold Minimum Edge Cut}
\label{sec:general-tmec}

In this section we prove that \TMEC{} admits an $O(\log n)$ approximation on
undirected general graphs. The algorithm uses a cut-dominating tree
decomposition and an exact dynamic program on trees.

If $G$ is disconnected, then terminals already outside the connected component
of $s$ are disconnected at zero cost. Thus one may first count those terminals.
If at least $k$ terminals are already disconnected from $s$, the empty cut is
optimal. Otherwise, we restrict to the connected component of $s$ and reduce
the quota accordingly. Hence, in the sequel, we assume without loss of
generality that $G$ is connected.

\subsection{Cut-Dominating Tree Decomposition}

We use the following cut-dominating tree-decomposition theorem as a black box.
It is a standard formulation of R\"acke-style cut-based hierarchical
decompositions.

\begin{theorem}[Cut-dominating tree decomposition]
\label{thm:racke-cut-tree}
For every undirected edge-weighted graph $G=(V,E,c)$ with $n=|V|$, there is a
randomized polynomial-time construction of a weighted tree $H$ whose leaves are
in bijection with the vertices of $G$ such that the following holds. For every
set $A\subseteq V$, let $\lambda_H(A)$ denote the minimum cost of a set of tree
edges whose removal separates the leaves corresponding to $A$ from the leaves
corresponding to $V\setminus A$. Then, for every $A\subseteq V$,
\[
c_G(\delta_G(A))
\le
\lambda_H(A),
\]
and
\[
\mathbb{E}_{H}\left[\lambda_H(A)\right]
\le
\alpha(n)c_G(\delta_G(A)),
\]
where
\[
\alpha(n)=O(\log n).
\]
\end{theorem}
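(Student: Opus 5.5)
We do not prove Theorem~\ref{thm:racke-cut-tree} from scratch. The plan is to derive it from R\"acke's construction~\cite{Racke2008}. That construction gives a polynomial-time computable probability distribution $\{(\mu_i,H_i)\}$ over decomposition trees. Each tree $H_i$ has three features. First, every tree node corresponds to a cluster $S\subseteq V$, and the leaves are the singletons, so the leaves are in bijection with $V$. Second, the tree edge $f$ joining a cluster $S_f$ to its parent has weight $w(f)=c_G(\delta_G(S_f))$. Third, every tree node is assigned a representative vertex of $G$, and every tree edge $f$ is embedded as a path $P_f$ in $G$ between the representatives of its endpoints. The key quantitative guarantee is on the expected relative load. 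Writing $\mathrm{rload}_i(e)=\sum_{f\in H_i:\,e\in P_f} w(f)/c(e)$, it states that
\[
\textstyle\sum_i \mu_i\,\mathrm{rload}_i(e)\le O(\log n)\quad\text{for every } e\in E .
\]
Sampling $H=H_i$ with probability $\mu_i$ gives the randomized construction. The two inequalities are then proved separately.

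\textbf{Dominance.} Fix $A$ and any set $F$ of tree edges whose removal separates the $A$-leaves from the $(V\setminus A)$-leaves. Root the tree and walk from the root to a leaf $v$. The side ($A$ or not) on which $v$ lies is determined by the parity of the number of $F$-edges crossed along this walk. Crossing $f$ toggles exactly the vertices in $S_f$. Hence $A$, or its complement, equals the symmetric difference $\triangle_{f\in F} S_f$. The boundary of a symmetric difference is contained in the union of the boundaries, so
\[
\textstyle\delta_G(A)\subseteq\bigcup_{f\in F}\delta_G(S_f),
\qquad\text{and therefore}\qquad
c_G(\delta_G(A))\le\sum_{f\in F} w(f).
\]
Minimizing over $F$ gives $c_G(\delta_G(A))\le\lambda_H(A)$.

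\textbf{Expected upper bound.} Fix $A$. Label every tree node by the side of $A$ that contains its representative vertex; leaves are labeled correctly, since a leaf's representative is the vertex itself. Let $F$ be the set of tree edges whose two endpoints receive different labels. Every leaf-to-leaf tree path between $A$ and $V\setminus A$ must change label somewhere, so $F$ is a feasible tree cut. For $f\in F$, the path $P_f$ joins vertices on opposite sides of $A$, so it uses at least one edge of $\delta_G(A)$. Charging $w(f)$ to such an edge yields
\[
\textstyle
\lambda_H(A)\le\sum_{f\in F}w(f)\le\sum_{e\in\delta_G(A)}\ \sum_{f:\,e\in P_f}w(f)=\sum_{e\in\delta_G(A)}c(e)\,\mathrm{rload}_H(e).
\]
Taking expectations and applying the load bound gives $\mathbb{E}_H[\lambda_H(A)]\le O(\log n)\,c_G(\delta_G(A))$.

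The main obstacle is not this derivation, which is short. It is pinning down precisely which form of R\"acke's result to cite. One needs the version in which the trees carry an embedding with expected relative load $O(\log n)$ per graph edge and the tree weights equal the cluster boundary capacities. Both hold in R\"acke's 2008 construction via the FRT-style reduction from congestion to distortion. If a cited version stated only that the trees have congestion $O(\log n)$ as a convex combination, I would first convert the max-over-edges guarantee into this per-edge expected-load form. By LP duality over the distribution, the two forms are the same statement.
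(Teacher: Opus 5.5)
The paper gives no proof of this statement. It invokes it as a black box, described as a standard cut-based formulation of R\"acke's hierarchical decompositions. Your proposal supplies the missing derivation from the load form that R\"acke actually proves: tree-edge weights $w(f)=c_G(\delta_G(S_f))$, together with an embedding whose expected relative load is $O(\log n)$ on every graph edge. This is a reasonable way to justify the cut form. It also makes explicit that domination holds deterministically for every tree in the support and uses only the weights, while the expectation bound uses only the embedding. Your expected-upper-bound argument is correct as written. The bichromatic edges of the representative labeling form a feasible tree cut, each of their embedded paths crosses $\delta_G(A)$, and the charging gives $\lambda_H(A)\le\sum_{e\in\delta_G(A)}c(e)\,\mathrm{rload}_H(e)$.

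The dominance argument, however, relies on a false intermediate claim. For an arbitrary separating set $F$, the side of a leaf is not determined by the parity of $F$-edges on its root path. Likewise, neither $A$ nor $V\setminus A$ need equal $\triangle_{f\in F}S_f$. For a counterexample, take $V=\{a,b\}$ and the decomposition tree whose root cluster $V$ has the two singleton leaves as children. Let $A=\{a\}$ and let $F$ consist of both tree edges. Then $F$ separates, both leaves have odd parity, and $\triangle_{f\in F}S_f=V$.

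The inclusion $\delta_G(A)\subseteq\bigcup_{f\in F}\delta_G(S_f)$ you want is still true, and it has a one-line proof that avoids parity. For $\{u,v\}\in\delta_G(A)$, the tree path between the leaves $u$ and $v$ contains some $f\in F$. Since $f$ lies on that path, $S_f$ contains exactly one of $u$ and $v$, so $\{u,v\}\in\delta_G(S_f)$. Alternatively, you can restrict to inclusion-minimal $F$, which is harmless because $w\ge 0$. Then each edge of $F$ joins two components of $H-F$ containing leaves from opposite sides, and your parity argument goes through.

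Finally, the LP-duality remark at the end is unnecessary. A bound on $\max_{e}\sum_i\mu_i\,\mathrm{rload}_i(e)$ is literally the per-edge bound you use.
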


The first inequality is the domination property: every tree cut dominates the
corresponding graph cut. The second inequality says that the expected tree
overestimate of every graph cut is only logarithmic.

\subsection{Exact \TMEC{} on a Tree}

We next show that \TMEC{} can be solved exactly on a weighted tree whose leaves
represent the original graph vertices.

Let $H$ be a weighted tree whose leaves are in bijection with the original
vertices $V$. We root $H$ at the leaf corresponding to $s$. For a node $u$ of
$H$, let $H_u$ denote the rooted subtree of $H$ below $u$, and let
$L(u)\subseteq V$ be the set of original vertices whose leaves lie in $H_u$.
Define
\[
\tau(u)=|L(u)\cap T|.
\]
All terminal counts are truncated at $k$.

For every node $u$ and every $j\in\{0,1,\dots,k\}$, define $DP[u,j]$ to be the
minimum cost of cutting tree edges strictly inside $H_u$ so that exactly $j$
terminals in $L(u)\cap T$ are disconnected from $u$, under the condition that
$u$ remains connected to its parent. The value $j=k$ means ``at least $k$''
after truncation.

For a rooted leaf $u$, no edge lies strictly inside $H_u$, so
\[
DP[u,0]=0,
\]
and
\[
DP[u,j]=+\infty
\qquad\text{for all }j>0.
\]
The root of the oriented tree is the original leaf corresponding to $s$; it is
treated by the same recurrence using its children. Thus, although it is a leaf
of the unrooted decomposition tree, it is not a rooted leaf unless $H$ has one
node.

Now consider an internal rooted node $u$ with children
\[
u_1,u_2,\dots,u_d.
\]
For a child $u_i$, there are two choices.

\paragraph{Keep the edge $(u,u_i)$.}
If the edge is kept, then the number of disconnected terminals contributed by
the child subtree is determined by $DP[u_i,\cdot]$.

\paragraph{Cut the edge $(u,u_i)$.}
If the edge $(u,u_i)$ is cut, then all terminals in $L(u_i)\cap T$ are
disconnected from $u$. This contributes
\[
\min\{\tau(u_i),k\}
\]
terminals and costs
\[
c_H(u,u_i).
\]

Thus define the child contribution table $B_i$ by initially setting
\[
B_i[j]=DP[u_i,j]
\qquad\text{for }j=0,1,\dots,k,
\]
and then updating
\[
B_i[\min\{\tau(u_i),k\}]
=
\min
\left\{
    B_i[\min\{\tau(u_i),k\}],
    c_H(u,u_i)
\right\}.
\]

The table $DP[u,\cdot]$ is obtained by knapsack convolution over the children.
Initialize
\[
Q_0[0]=0
\]
and
\[
Q_0[j]=+\infty
\qquad\text{for }j>0.
\]
For $i=1,\dots,d$, define
\[
Q_i[j]
=
\min_{\substack{a,b\in\{0,1,\dots,k\}\\ \min\{a+b,k\}=j}}
\left\{
    Q_{i-1}[a]+B_i[b]
\right\}.
\]
Finally set
\[
DP[u,j]=Q_d[j]
\qquad\text{for }j=0,1,\dots,k.
\]

At the root $r$, which corresponds to the original root vertex $s$, the optimum
tree value is
\[
\OPT_H=DP[r,k].
\]

\begin{lemma}[Exact tree algorithm]
\label{lem:tmec-tree-dp}
\TMEC{} on a weighted tree whose leaves are the original vertices can be solved
exactly in polynomial time. A straightforward implementation runs in
\[
O(|V(H)|k^2)
\]
time.
\end{lemma}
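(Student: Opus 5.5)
The proof has two parts. First, the dynamic program computes the tree optimum exactly. Second, the running time bound holds. I would first pin down the tree problem precisely. Given tree $H$ with leaf set identified with $V$, a feasible solution is a set $F$ of tree edges such that at least $k$ terminal leaves lie outside the component of $H\setminus F$ containing the leaf $s$. Minimizing $c_H(F)$ over such $F$ is equivalent to minimizing $\lambda_H(A)$ over sets $A$ with $s\notin A$ and $|A\cap T|\ge k$. Given $F$, take $A$ to be the leaves not connected to $s$; then $F$ separates $A$ from $V\setminus A$. Conversely, a separating edge set for $A$ disconnects every leaf of $A$ from $s$. So $\OPT_H=\min_A\lambda_H(A)$, and it suffices to show $DP[r,k]$ equals the minimum cost of such an $F$.

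The correctness argument is an induction on the height of $u$. The claim is that $DP[u,j]$ equals the minimum cost of an edge set $F\subseteq E(H_u)$ such that exactly $j$ terminals of $L(u)$ (at least $k$ if $j=k$) are disconnected from $u$ in $H_u\setminus F$.

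The base case, $u$ a leaf, is immediate. In the inductive step, decompose any such $F$ by the children of $u$: $F=\bigcup_i F_i$, where $F_i$ consists of edges of $H_{u_i}$ together with possibly the edge $(u,u_i)$. Terminals in different child subtrees are disconnected from $u$ independently, so the count is additive, and truncation at $k$ commutes with addition via $\min\{a+b,k\}$. For a single child there are two cases. If $(u,u_i)\notin F_i$, a terminal in $L(u_i)$ is disconnected from $u$ iff it is disconnected from $u_i$, so by induction the best cost is $DP[u_i,b]$. If $(u,u_i)\in F_i$, all $\tau(u_i)$ terminals are disconnected, and optimally no further edges inside $H_{u_i}$ are cut, since costs are nonnegative; this gives cost $c_H(u,u_i)$ at index $\min\{\tau(u_i),k\}$. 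Hence $B_i[b]$ is the optimal cost for child $i$ contributing $b$. The convolution $Q_i$ then computes, by a further induction on $i$, the optimum over the first $i$ children with truncated total $j$. Both directions are needed: every combination of per-child choices yields a valid $F$ (upper bound), and every $F$ restricts to per-child choices (lower bound).

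The main subtlety is the truncation. We must check that replacing counts by $\min\{\cdot,k\}$ loses nothing. This holds because $\min\{\min\{a,k\}+\min\{b,k\},k\}=\min\{a+b,k\}$, so the truncated state is a faithful summary for the only question asked at the root, namely whether the count is at least $k$. A second subtlety is the root: $r$ is the leaf $s$, with one child in the rooted tree, and the recurrence applies unchanged. The only terminals disconnected from $r$ are those in $L(r)\setminus\{s\}$, and $s\notin T$, so $DP[r,k]$ is exactly the tree optimum. An optimal $F$ is recovered by standard backtracking.

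For the running time, each child table $B_i$ is built in $O(k)$ time. Each convolution step takes $O(k^2)$ pairs $(a,b)$. Each tree edge $(u,u_i)$ triggers exactly one convolution, so the total is $O(|E(H)|k^2)=O(|V(H)|k^2)$, plus $O(|V(H)|)$ to compute all $\tau(u)$ by a bottom-up pass. This is polynomial since $k\le|T|\le n$.
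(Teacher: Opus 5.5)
Your proposal is correct and follows the paper's argument: for each child edge you either keep it or cut it, choices in different child subtrees are independent, the child tables are combined by knapsack convolution, and induction runs from the rooted leaves up to the root leaf $s$. You also make explicit two points the paper leaves implicit: truncation loses nothing because $\min\{\min\{a,k\}+\min\{b,k\},k\}=\min\{a+b,k\}$, and charging $O(k^2)$ per tree edge gives the stated $O(|V(H)|k^2)$ bound.
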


\begin{proof}
For every child subtree, an optimal tree solution has exactly two possibilities:
either it cuts the edge to the child, thereby disconnecting all terminals in
that child subtree, or it keeps the edge and recursively disconnects some
terminals inside the child subtree. Since a tree has no cycles, choices in
different child subtrees are independent once the state at the parent is fixed.

The recurrence enumerates all such possibilities and combines them by knapsack
convolution. By induction from the rooted leaves to the root, $DP[u,j]$ is
exactly the minimum cost of disconnecting $j$ terminals inside $H_u$ while
keeping $u$ connected to its parent. At the root, the state $j=k$ represents
disconnecting at least $k$ terminals from the root leaf corresponding to $s$.
Hence $DP[r,k]$ is the exact optimum tree value.
\end{proof}

\subsection{The Approximation Algorithm}

\begin{algorithm}[H]
\caption{General-graph \TMEC{} via cut-dominating trees}
\label{alg:general-tmec-racke}
\begin{algorithmic}[1]
\Require Undirected graph $G=(V,E)$, edge costs $c$, root $s$, terminals $T$, quota $k$.
\State Sample a cut-dominating tree $H$ using Theorem~\ref{thm:racke-cut-tree}.
\State Solve \TMEC{} exactly on $H$ using Lemma~\ref{lem:tmec-tree-dp}.
\State Let $F_H$ be the tree edge set returned by the dynamic program.
\State Let $X_H=\{v\in V:\text{ the leaf corresponding to }v\text{ is disconnected from }s\text{ in }H-F_H\}$.
\State Return the graph cut $\delta_G(X_H)$.
\end{algorithmic}
\end{algorithm}

\begin{theorem}[General-graph \TMEC{} approximation]
\label{thm:general-tmec-logn}
Threshold minimum edge cut on undirected general graphs admits a randomized
polynomial-time expected
\[
O(\log n)
\]
approximation.
\end{theorem}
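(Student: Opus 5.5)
We analyze Algorithm~\ref{alg:general-tmec-racke} in three steps: feasibility, an upper bound on the cost of the returned cut by the tree optimum, and an upper bound on the expected tree optimum by the graph optimum.

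\textbf{Feasibility and running time.} By Lemma~\ref{lem:tmec-tree-dp}, the dynamic program returns an edge set $F_H$ of cost $\OPT_H$. In $H-F_H$, at least $k$ terminal leaves are disconnected from the leaf of $s$. By definition of $X_H$, we have $s\notin X_H$ and $|X_H\cap T|\ge k$, so $\delta_G(X_H)$ is feasible for \TMEC. Theorem~\ref{thm:racke-cut-tree} and Lemma~\ref{lem:tmec-tree-dp} each run in polynomial time; note $k\le n$ and $|V(H)|$ is polynomial.

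\textbf{Cost of the returned cut.} Removing $F_H$ separates the leaves of $X_H$ from the leaves of $V\setminus X_H$. Indeed, every leaf in $V\setminus X_H$ remains connected to $s$, while no leaf in $X_H$ does. Hence $\lambda_H(X_H)\le c_H(F_H)=\OPT_H$. The domination inequality of Theorem~\ref{thm:racke-cut-tree} then gives
\[
c_G(\delta_G(X_H))\le \lambda_H(X_H)\le \OPT_H .
\]

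\textbf{Bounding $\OPT_H$.} Fix an optimal graph solution $X^*$, so $s\notin X^*$, $|X^*\cap T|\ge k$, and $c_G(\delta_G(X^*))=\OPT_E$. Let $F^*$ be a minimum tree edge set attaining $\lambda_H(X^*)$. In $H-F^*$, every leaf of $X^*$ is separated from the leaf of $s$, because $s\in V\setminus X^*$. Thus $F^*$ is feasible for the tree problem solved by the DP. The DP's feasible family is exactly "disconnect at least $k$ terminal leaves from the root leaf," with arbitrary non-leaf nodes allowed on either side. It follows that $\OPT_H\le \lambda_H(X^*)$. Taking expectation over $H$ and using the second inequality of Theorem~\ref{thm:racke-cut-tree} for the fixed set $A=X^*$, we get
\[
\mathbb{E}[c_G(\delta_G(X_H))]\le \mathbb{E}[\OPT_H]\le \mathbb{E}[\lambda_H(X^*)]\le \alpha(n)\OPT_E=O(\log n)\,\OPT_E .
\]
The preprocessing to a connected $G$ loses nothing, since terminals already disconnected from $s$ cost zero to separate.

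\textbf{Main obstacle.} The delicate point is matching the tree DP's notion of feasibility with $\lambda_H$ in both directions. In one direction we must ensure the DP optimum is at most $\lambda_H(X^*)$, which needs the DP to range over all cuts separating at least $k$ terminal leaves from $s$. In the other direction, the returned $F_H$ must induce a leaf bipartition whose $\lambda_H$ value is at most $c_H(F_H)$. I would handle this by arguing directly from connectivity in $H-F$, in both directions, rather than through the DP's internal state semantics. The one care point there is the convention for truncation at $k$, which needs only that "at least $k$" is captured.
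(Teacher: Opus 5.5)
Your proposal is correct and follows the paper's proof essentially step for step: you bound the tree optimum by $\lambda_H(X^\star)$ for the fixed optimal set $X^\star$, take expectation via the distortion bound, and then use domination together with $\lambda_H(X_H)\le c_H(F_H)$ to charge the returned graph cut to the tree optimum. Your explicit connectivity argument for why $F_H$ separates $X_H$ from $V\setminus X_H$ is a slightly more detailed version of the same step in the paper.
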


\begin{proof}
Let $X^\star$ be an optimal \TMEC{} solution in $G$. Thus
\[
s\notin X^\star,
\qquad
|X^\star\cap T|\ge k,
\]
and
\[
c_G(\delta_G(X^\star))=\OPT_G,
\]
where $\OPT_G=\OPT_E(G,s,T,k)$.

Sample a cut-dominating tree $H$ from the distribution guaranteed by
Theorem~\ref{thm:racke-cut-tree}. Since $X^\star$ separates at least $k$
terminals from $s$, the minimum tree edge set separating the leaves
corresponding to $X^\star$ from the leaves corresponding to $V\setminus X^\star$
is feasible for the tree \TMEC{} instance. Therefore, if $F_H$ is the optimum
tree edge set found by the dynamic program, then
\[
c_H(F_H)
\le
\lambda_H(X^\star).
\]

By the expected distortion guarantee of Theorem~\ref{thm:racke-cut-tree},
\[
\mathbb{E}_{H}[\lambda_H(X^\star)]
\le
O(\log n)c_G(\delta_G(X^\star))
=
O(\log n)\OPT_G.
\]
Hence
\[
\mathbb{E}_{H}[c_H(F_H)]
\le
O(\log n)\OPT_G.
\]

Now define
\[
X_H
=
\{v\in V:\text{ the leaf corresponding to }v\text{ is disconnected from }s
\text{ in }H-F_H\}.
\]
Since $F_H$ disconnects at least $k$ terminal leaves from the root leaf, we
have
\[
s\notin X_H,
\qquad
|X_H\cap T|\ge k.
\]
Thus $\delta_G(X_H)$ is a feasible \TMEC{} solution in $G$.

It remains to bound its graph cost. By the domination property of
Theorem~\ref{thm:racke-cut-tree},
\[
c_G(\delta_G(X_H))
\le
\lambda_H(X_H).
\]
Furthermore, $F_H$ itself separates the leaves in $X_H$ from the leaves in
$V\setminus X_H$. Therefore,
\[
\lambda_H(X_H)
\le
c_H(F_H).
\]
Combining these inequalities gives
\[
c_G(\delta_G(X_H))
\le
c_H(F_H).
\]
Taking expectation,
\[
\mathbb{E}_{H}\left[c_G(\delta_G(X_H))\right]
\le
\mathbb{E}_{H}[c_H(F_H)]
\le
O(\log n)\OPT_G.
\]
Therefore Algorithm~\ref{alg:general-tmec-racke} is an expected
$O(\log n)$ approximation.
\end{proof}

\begin{corollary}[High-probability version]
\label{cor:general-tmec-high-prob}
For any $\eta\in(0,1)$, by sampling
\[
O(\log(1/\eta))
\]
independent decomposition trees, solving the tree instance on each, and
returning the cheapest feasible graph cut obtained, one gets an
\[
O(\log n)
\]
approximation with probability at least $1-\eta$.
\end{corollary}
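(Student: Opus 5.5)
The plan is to combine Markov's inequality on each sampled tree with independence across samples. From the proof of Theorem~\ref{thm:general-tmec-logn}, a single run of Algorithm~\ref{alg:general-tmec-racke} returns a feasible cut $\delta_G(X_H)$ whose cost $Z_H=c_G(\delta_G(X_H))$ is a nonnegative random variable. Its expectation satisfies
\[
\mathbb{E}_H[Z_H]\le \alpha(n)\OPT_G,
\qquad
\alpha(n)=O(\log n).
\]
Feasibility holds deterministically in every run: $s\notin X_H$ and $|X_H\cap T|\ge k$ are always true. So the only randomness concerns the cost, and taking the cheapest of several outputs is well defined.

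The first step is to apply Markov's inequality to a single sample. This gives
\[
\Pr\bigl[Z_H> 2\alpha(n)\OPT_G\bigr]\le \tfrac12.
\]
The degenerate case $\OPT_G=0$ needs a separate word. There $\mathbb{E}[Z_H]=0$ forces $Z_H=0$ almost surely, and the bound holds trivially. If one wants this case handled deterministically, one can also test beforehand whether the zero-cost edges alone already disconnect $k$ terminals. This is a connectivity check on the subgraph of positive-cost edges.

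The second step is to draw $m=\lceil \log_2(1/\eta)\rceil=O(\log(1/\eta))$ independent trees $H_1,\dots,H_m$. For each one we run the tree dynamic program of Lemma~\ref{lem:tmec-tree-dp} and map the result back, obtaining feasible cuts with costs $Z_{H_1},\dots,Z_{H_m}$. The returned cut has cost $\min_i Z_{H_i}$. By independence,
\[
\Pr\Bigl[\min_i Z_{H_i}>2\alpha(n)\OPT_G\Bigr]=\prod_{i=1}^m \Pr\bigl[Z_{H_i}>2\alpha(n)\OPT_G\bigr]\le 2^{-m}\le \eta .
\]
Hence, with probability at least $1-\eta$, the output is a feasible cut of cost at most $2\alpha(n)\OPT_G=O(\log n)\OPT_G$. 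Selecting the cheapest cut is done by evaluating $c_G(\delta_G(X_{H_i}))$ directly on the graph, so the algorithm never needs to know $\OPT_G$.

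The final step is the running time. The total work is $m$ times the cost of one sample, namely one decomposition from Theorem~\ref{thm:racke-cut-tree}, one $O(|V(H)|k^2)$ dynamic program, and one cut evaluation. This is polynomial for any fixed $\eta$, and polynomial in $\log(1/\eta)$ overall.

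I expect no real obstacle. The one point needing care is that Markov's inequality is applied to the graph cost $Z_H$ and not to the tree cost. This is legitimate because the proof of Theorem~\ref{thm:general-tmec-logn} already established $Z_H\le c_H(F_H)$ pointwise, so the expectation bound on $c_H(F_H)$ transfers to $Z_H$ before Markov is invoked.
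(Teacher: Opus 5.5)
Your proposal is correct and follows the paper's proof essentially step for step. Both apply Markov's inequality at threshold $2\alpha(n)\OPT_G$ to the graph cost of a single sample, both treat the $\OPT_G=0$ case separately, and both use $\lceil \log_2(1/\eta)\rceil$ independent repetitions to get failure probability at most $\eta$. Your additions, the deterministic zero-cost connectivity test and the running-time accounting, are correct but not needed.
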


\begin{proof}
If $\OPT_G=0$, the expected-cost proof of Theorem~\ref{thm:general-tmec-logn}
implies that each sampled solution has cost $0$ with probability $1$, so the
claim is immediate. Assume $\OPT_G>0$. For one sampled tree,
Theorem~\ref{thm:general-tmec-logn} gives
\[
\mathbb{E}\left[c_G(\delta_G(X_H))\right]
\le
C\log n\cdot \OPT_G
\]
for some absolute constant $C$. By Markov's inequality,
\[
\Pr\left[
    c_G(\delta_G(X_H))>2C\log n\cdot \OPT_G
\right]
\le
\frac12.
\]
After $q$ independent repetitions, the probability that every sampled solution
has cost larger than $2C\log n\cdot \OPT_G$ is at most $2^{-q}$. Choosing
\[
q=\lceil \log_2(1/\eta)\rceil
\]
gives success probability at least $1-\eta$.
\end{proof}

\section{Planar Threshold Minimum Cut}
\label{sec:planar-threshold-cut}

In this section we prove a factor-$2$ approximation for planar \TMEC{} and a
$2\Delta$-approximation for planar \TMNC{} when every deletable vertex has
maximum degree at most $\Delta$.

\subsection{A Factor-$2$ Approximation for Planar \TMEC}

We reduce planar \TMEC{} to planar weighted balanced cut. We use the following
theorem of Garg, Saran, and Vazirani~\cite{GargSaranVazirani1999}.

\begin{theorem}[Planar balanced cut approximation]
\label{thm:gsv-balanced-cut}
Let $G=(V,E)$ be a planar graph with nonnegative edge costs and nonnegative
vertex weights $\mu:V\to \R_{\ge 0}$. For every balance parameter $b\le 1/3$,
there is a polynomial-time factor-$2$ approximation for the minimum-cost
$b$-balanced edge cut when the vertex weights are unary.
\end{theorem}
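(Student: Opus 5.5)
Since Theorem~\ref{thm:gsv-balanced-cut} is quoted from Garg, Saran, and Vazirani~\cite{GargSaranVazirani1999}, the intended proof is by citation. If I were to reprove it, I would follow their planar-duality route. The plan has three steps: move to the planar dual $G^*$, solve a pseudo-polynomial ``exact enclosed weight'' cycle problem there, and show that combining a few such cycles costs at most twice the optimum.

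\emph{Step 1: cuts as dual cycles.} For any $X\subseteq V$, the edge set $\delta(X)$ corresponds in $G^*$ to an Eulerian subgraph. This subgraph decomposes into edge-disjoint simple cycles $C_1,\dots,C_m$. Each $C_i$ encloses a region $R_i$ of the plane, and $X$ is recovered by parity: a vertex lies in $X$ iff it is enclosed by an odd number of the $C_i$. I would fix a spanning tree of $G$ and label each dual edge by the change in enclosed weight when the tree path is crossed. Then the weight $\mu(R_i)$ enclosed by a closed dual walk can be read off as a sum of labels along the walk.

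\emph{Step 2: exact-weight cycles.} For every integer $w\in\{0,\dots,\mu(V)\}$ I would compute a minimum-cost closed dual walk of enclosed weight exactly $w$. This is a shortest-path computation in the product graph $V(G^*)\times\{0,\dots,\mu(V)\}$. It is polynomial precisely because the weights are unary, which is where that hypothesis enters. The delicate point is that the optimal walk need not be a simple cycle. I would show that it can be uncrossed into simple cycles whose enclosed regions are nested or disjoint. The resulting set $X_w$ then has $\delta(X_w)$ of cost at most the walk cost and $\mu(X_w)$ at least as balanced as needed.

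\emph{Step 3: the factor $2$, which I expect to be the main obstacle.} Let $X^\star$ be an optimal $b$-balanced set with cycle decomposition $C_1,\dots,C_m$. If some single region (or its complement) has weight in $[b\mu(V),(1-b)\mu(V)]$, then Step~2 finds a set of cost at most $\OPT$. Otherwise every region is either light (weight below $b\mu(V)$) or heavy (weight above $(1-b)\mu(V)$), and I would pass to complements so that all regions are light. I would then add regions greedily. Because each increment is below $b\le 1/3$ of the total, the running weight first enters $[b\mu(V),(1-b)\mu(V)]$ without overshooting, and this is where $b\le1/3$ is used. The resulting union is obtained by combining at most two ``exact-weight'' objects, an accumulated nested family and one further cycle, and each costs at most $\OPT$. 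The algorithm therefore tries all pairs $(w_1,w_2)$ and returns a cut of cost at most $2\OPT$.

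The hard part is justifying that the greedy union is representable by two computed walks without losing more than $\OPT$ each. This requires a careful uncrossing and parity argument in the dual, and I would handle it by following~\cite{GargSaranVazirani1999} closely.
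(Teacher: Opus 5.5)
Citing Garg, Saran, and Vazirani~\cite{GargSaranVazirani1999} is all the paper does: it imports the theorem as a black box and never proves it, so your first sentence settles the statement. One bridge is worth adding. The paper calls $F$ $b$-balanced if every component of $G-F$ has weight at most $(1-b)\mu(V)$, whereas the cited result is about bipartitions with $b\mu(V)\le\mu(S)\le(1-b)\mu(V)$. For $b\le 1/3$ the two optima coincide. A component-balanced $F$ contains $\delta(S)$, where $S$ is either a single component of weight at least $b\mu(V)$, or a greedy union of lighter components stopped when it first reaches $b\mu(V)$, hence below $2b\mu(V)\le(1-b)\mu(V)$. This is the same no-overshoot argument as in your Step~3.

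The reconstruction you append, however, is not a proof, and it is not how GSV get the factor $2$. They accumulate sparse cuts iteratively under a relative measure they call net-sparsity, built on Rao's and Park--Phillips' techniques. Step~2 already fails as stated. The label sum of a closed dual walk is $\sum_v \mu(v)\,\mathrm{wind}(v)$, and winding numbers can be $2$ or negative. Suppose a region $A$ with $\mu(A)=w/2<b\mu(V)$ has a cheap boundary. Then the minimum-cost walk with label sum $w$ may simply circle $A$ twice. Uncrossing it yields only $A$ (or $\emptyset$ by parity), and neither is balanced. Park and Phillips avoid this because their objective is a ratio, so some cycle of the decomposition is at least as good as the whole walk; an exact weight target has no such property. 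Hence even Case~1 of Step~3 is unsupported.

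In the rest of Step~3 the structural claim is correct: some union of $\OPT$'s light regions is balanced and costs at most $\OPT$. But the algorithm never sees those regions. It pairs two objects computed independently for targets $w_1$ and $w_2$, whose regions may overlap arbitrarily. The combined set can therefore weigh as little as $\max\{w_1,w_2\}$ and need not be balanced. Moreover, the accumulated family is bounded by several disjoint dual cycles, so it is not a single closed walk of the kind Step~2 computes. To make the second piece pay only for what it adds, you must choose it relative to the set $S$ already chosen, roughly trading $c(\delta(S\cup D))-c(\delta(S))$ against $\mu(D\setminus S)$. That is GSV's net-sparsity. They believe minimizing it is NP-hard even in planar graphs, and most of their paper is machinery to work around this. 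A sanity check confirms something essential is missing. If your subroutine could return a minimum-cost set of every prescribed exact weight, as Step~3 implicitly needs for the accumulated family, then minimizing over $w\in[b\mu(V),(1-b)\mu(V)]$ would already be exact and the factor $2$ would be superfluous.
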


A cut is $b$-balanced if, after removing the cut edges, every connected
component has weight at most
\[
(1-b)\mu(V).
\]

\begin{theorem}[Planar \TMEC]
\label{thm:planar-tmec-2approx}
Let $G=(V,E)$ be a planar graph with nonnegative edge costs, root $s$, terminal
set $T$, and quota $k$. Then planar \TMEC{} admits a polynomial-time factor-$2$
approximation, assuming the unary-weight version of
Theorem~\ref{thm:gsv-balanced-cut}.
\end{theorem}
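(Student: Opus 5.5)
The plan is to reduce planar \TMEC{} to the planar $b$-balanced cut problem of Theorem~\ref{thm:gsv-balanced-cut}, using the weighting already described in the introduction. Define vertex weights $\mu(s)=2|T|$, $\mu(t)=1$ for every $t\in T$, and $\mu(v)=0$ for all other $v$. Then $\mu(V)=3|T|$. Set $b=k/(3|T|)$, so that $b\le 1/3$ because $k\le |T|$, and
\[
(1-b)\mu(V)=3|T|-k .
\]
All weights are integers bounded by $2|T|\le 2n$, so they are unary (polynomially bounded) and Theorem~\ref{thm:gsv-balanced-cut} applies to the same planar graph $G$ with the same edge costs.

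\textbf{Step 1 (optimal threshold cuts are balanced).} Let $X^\star$ be an optimal \TMEC{} solution and remove $F^\star=\delta_G(X^\star)$. Every component of $G-F^\star$ lies entirely inside $X^\star$ or entirely inside $V\setminus X^\star$.
\begin{itemize}
\item The component containing $s$ lies in $V\setminus X^\star$. That set contains at most $|T|-k$ terminals, so this component has weight at most $2|T|+|T|-k=3|T|-k$.
\item Every other component avoids $s$, so its weight is at most $|T|\le 3|T|-k$, using $k\le 2|T|$.
\end{itemize}
Hence $F^\star$ is $b$-balanced, and the minimum $b$-balanced cut cost is at most $\OPT_E(G,s,T,k)$.

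\textbf{Step 2 (balanced cuts are threshold-feasible).} Run the algorithm of Theorem~\ref{thm:gsv-balanced-cut} to obtain a $b$-balanced edge set $F$ with $c(F)\le 2\,\OPT_E$. Let $C_s$ be the component of $G-F$ containing $s$, and set $X=V\setminus C_s$.

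Balance gives $\mu(C_s)\le 3|T|-k$. Since $\mu(C_s)=2|T|+|C_s\cap T|$, this forces $|C_s\cap T|\le |T|-k$, so $|X\cap T|\ge k$. Also $s\notin X$, so $X$ is feasible for \TMEC.

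Every edge of $\delta_G(X)=\delta_G(C_s)$ joins $C_s$ to a vertex outside it. Such an edge must belong to $F$, since otherwise its endpoint would be in $C_s$. Therefore $\delta_G(X)\subseteq F$, and with nonnegative costs
\[
c_G(\delta_G(X))\le c(F)\le 2\,\OPT_E .
\]
Output $\delta_G(X)$.

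\textbf{Main obstacle.} The only delicate point is matching the balance notion to the precise formulation in Garg--Saran--Vazirani. Their theorem may state balance as ``each side of a bipartition has weight at least $b\mu(V)$'' rather than as a bound on every component.
\begin{itemize}
\item If the definition is bipartition-based, I would redo both steps with the two sides $X^\star$ and $V\setminus X^\star$. In Step 1, the side $V\setminus X^\star$ has weight at most $3|T|-k$, and the side $X^\star$ has weight at least $k=b\mu(V)$. In Step 2, the side not containing $s$ must have weight at least $k$; since $s$ is not in it, that weight is exactly its number of terminals, so it contains at least $k$ terminals.
\item The factor $2|T|$ on $s$ exists precisely so that $s$ always lies on the heavy side. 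This keeps the reduction exact in either formulation.
\end{itemize}
I would also verify that the rounding of $b$ permitted by the cited theorem does not disturb the integrality of the inequality $\mu(C_s)\le 3|T|-k$.
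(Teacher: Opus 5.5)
Your proposal is correct and follows the paper's reduction: the same weights $\mu(s)=2|T|$, $\mu(t)=1$ and $b=k/(3|T|)$, the argument that every feasible threshold cut is $b$-balanced (so $\OPT_{\mathrm{bal}}\le\OPT_E$), and the same extraction $X=V\setminus C_s$ with $\delta_G(X)\subseteq F$. Your component-by-component check in Step 1 (using $k\le 2|T|$) and your note that the argument also works under a bipartition-based definition of balance spell out points the paper leaves implicit, but they do not change the route.
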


\begin{proof}
Let
\[
m=|T|.
\]
Define vertex weights $\mu$ by
\[
\mu(s)=2m,
\]
\[
\mu(t)=1
\qquad\text{for every }t\in T,
\]
and
\[
\mu(v)=0
\qquad\text{for every }v\in V\setminus(\{s\}\cup T).
\]
Then
\[
\mu(V)=3m.
\]
Set
\[
b=\frac{k}{3m}.
\]
Since $k\le m$, we have
\[
b\le \frac13.
\]
Therefore Theorem~\ref{thm:gsv-balanced-cut} applies. The weights are bounded
by $2m$ and hence are unary-expandable in polynomial size.

We first show that every feasible \TMEC{} solution induces a feasible
$b$-balanced cut. Let $X\subseteq V$ satisfy
\[
s\notin X
\qquad\text{and}\qquad
|X\cap T|\ge k.
\]
The side containing $s$ has weight at most
\[
2m+(m-k)
=
3m-k
=
(1-b)\mu(V).
\]
The side $X$ has weight at most
\[
m
\le
3m-k
=
(1-b)\mu(V).
\]
Thus $\delta_G(X)$ is a feasible $b$-balanced cut. Hence
\[
\OPT_{\mathrm{bal}}
\le
\OPT_E(G,s,T,k).
\]

Now let $F$ be the cut returned by the factor-$2$ planar balanced-cut
algorithm. Then
\[
c(F)
\le
2\OPT_{\mathrm{bal}}
\le
2\OPT_E(G,s,T,k).
\]
Let $R$ be the connected component containing $s$ in $G-F$, and define
\[
X=V\setminus R.
\]
Since $F$ is $b$-balanced, the root component $R$ satisfies
\[
\mu(R)\le 3m-k.
\]
Because $s\in R$, we have
\[
\mu(R)=2m+|R\cap T|.
\]
Therefore
\[
2m+|R\cap T|
\le
3m-k,
\]
which implies
\[
|R\cap T|\le m-k.
\]
Hence
\[
|X\cap T|\ge k.
\]
Also $s\notin X$. Thus $X$ is feasible for \TMEC{}.

Finally, every edge of $\delta_G(X)$ belongs to the removed edge set $F$, so
\[
c(\delta_G(X))
\le
c(F)
\le
2\OPT_E(G,s,T,k).
\]
Therefore planar \TMEC{} admits a factor-$2$ approximation.
\end{proof}

\subsection{A \texorpdfstring{$2\Delta$}{2-Delta}-Approximation for Bounded-Degree Planar \TMNC}

We now derive a constant-factor approximation for planar \TMNC{} when the
maximum degree of deletable vertices is bounded.

Let
\[
D=V\setminus(\{s\}\cup T)
\]
be the set of deletable vertices, and define
\[
\Delta
=
\max\{\deg(v):v\in D\}.
\]
We assume that the \TMNC{} instance has a finite feasible solution. If no
finite feasible solution exists, then the problem is infeasible under the
undeletable root and terminal constraints.

Let
\[
C_{\mathrm{all}}
=
\sum_{v\in D} c(v).
\]
Choose a sufficiently large number
\[
M>2\Delta C_{\mathrm{all}}.
\]
For notational convenience, set
\[
c(s)=M
\]
and
\[
c(t)=M
\qquad\text{for every }t\in T.
\]
For every edge $e=\{u,v\}\in E$, define the derived edge cost
\[
w(e)=\min\{c(u),c(v)\}.
\]
This gives a planar \TMEC{} instance on the same graph.

\begin{lemma}[Node optimum induces a bounded-cost edge cut]
\label{lem:node-to-edge-bounded}
Let $\OPT_N$ be the optimum value of the planar \TMNC{} instance, and let
$\OPT_E^w$ be the optimum value of the derived planar \TMEC{} instance with
edge costs $w(e)=\min\{c(u),c(v)\}$. Then
\[
\OPT_E^w
\le
\Delta\OPT_N.
\]
\end{lemma}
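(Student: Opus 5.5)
Take an optimal node solution $X^\star$ for the \TMNC{} instance, so $s\notin X^\star$, $|X^\star\cap T|\ge k$, and $S^\star=N_G(X^\star)$ has $c(S^\star)=\OPT_N$. Since a finite feasible solution exists by assumption, $S^\star\subseteq D$. Indeed, any vertex of $\{s\}\cup T$ in $N(X^\star)$ would have infinite cost in the original instance, so it cannot appear in an optimal finite solution. Hence every vertex of $S^\star$ is deletable and has degree at most $\Delta$. The plan is to build a set $Y$ feasible for the derived \TMEC{} instance whose edge boundary can be charged to the vertices of $S^\star$, with each vertex charged at most $\Delta$ times its cost.

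The natural choice is $Y=X^\star$. Then $s\notin Y$ and $|Y\cap T|\ge k$, so $Y$ is feasible for the derived instance and $\OPT_E^w\le w(\delta_G(X^\star))$. Every edge $e=\{u,v\}\in\delta_G(X^\star)$ with $u\in X^\star$ has $v\in N(X^\star)=S^\star$. Therefore
\[
w(e)=\min\{c(u),c(v)\}\le c(v),
\]
and we charge $e$ to its endpoint $v\in S^\star$.

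Each $v\in S^\star$ receives charge only from edges incident to it, and there are at most $\deg(v)\le\Delta$ of those. Summing gives
\[
w(\delta_G(X^\star))\le\sum_{v\in S^\star}\deg(v)\,c(v)\le\Delta\, c(S^\star)=\Delta\OPT_N.
\]
This proves the lemma.

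The only delicate point is justifying $S^\star\subseteq D$, which is what lets us use the degree bound $\Delta$ and the finite costs. The reassignment $c(s)=c(t)=M$ concerns only the derived edge costs, not the node optimum, whose undeletable vertices retain infinite cost. If one instead worried that $N(X^\star)$ might contain a terminal, the fix is to enlarge $X^\star$ by absorbing that terminal. This only adds terminals and keeps $s$ outside, because $s\in N(X^\star)$ is excluded by finiteness. I expect this to be unnecessary, since in that case the cost would already be infinite. No planarity is needed for this lemma; planarity enters only when the factor-$2$ algorithm of Theorem~\ref{thm:planar-tmec-2approx} is applied afterwards.
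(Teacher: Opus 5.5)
Your proof is correct and follows the paper's argument: use the optimal node solution's vertex side as a feasible side for the derived edge instance, charge each crossing edge to its endpoint in the separator via $w(e)\le c(v)$, and apply $\deg(v)\le\Delta$. The differences are cosmetic. The paper starts from the separator $S^\star$, sets $X^\star=V\setminus(R\cup S^\star)$ with $R$ the root component, and argues that no edge joins $X^\star$ to $R$. You take $X^\star$ straight from the definition, so crossing edges land in $N(X^\star)$ immediately, and you spell out that $S^\star\subseteq D$, which the paper uses only implicitly.
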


\begin{proof}
Let $S^\star$ be an optimal node separator for \TMNC{}, so
\[
c(S^\star)=\OPT_N.
\]
Let $R$ be the connected component containing $s$ in $G-S^\star$, and define
\[
X^\star=V\setminus(R\cup S^\star).
\]
Since $S^\star$ disconnects at least $k$ terminals from $s$, we have
\[
s\notin X^\star
\qquad\text{and}\qquad
|X^\star\cap T|\ge k.
\]
Therefore $X^\star$ is feasible for the derived edge-cut threshold problem.

Every edge in $\delta_G(X^\star)$ has one endpoint in $X^\star$ and the other
endpoint in $S^\star$. Otherwise, an edge from $X^\star$ to the root component
would make the two vertices belong to the same connected component of
$G-S^\star$. Hence, for every crossing edge $e=\{u,v\}$ incident to a separator
vertex $v\in S^\star$, we have
\[
w(e)\le c(v).
\]
Therefore
\[
w(\delta_G(X^\star))
\le
\sum_{v\in S^\star}\deg(v)c(v).
\]
Since every deletable vertex has degree at most $\Delta$,
\[
w(\delta_G(X^\star))
\le
\Delta\sum_{v\in S^\star}c(v)
=
\Delta\OPT_N.
\]
Thus
\[
\OPT_E^w\le \Delta\OPT_N.
\]
\end{proof}

\begin{lemma}[Edge cut induces a node separator]
\label{lem:edge-to-node-cover}
Let $X\subseteq V$ satisfy
\[
s\notin X
\qquad\text{and}\qquad
|X\cap T|\ge k.
\]
Suppose also that
\[
w(\delta_G(X))<M.
\]
Then one can construct in polynomial time a node separator $S_X\subseteq D$
such that at least $k$ terminals are disconnected from $s$ in $G-S_X$ and
\[
c(S_X)\le w(\delta_G(X)).
\]
\end{lemma}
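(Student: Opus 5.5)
The plan is to build $S_X$ by choosing, for every crossing edge, a cheapest endpoint, and then check feasibility and cost separately. For each edge $e=\{u,v\}\in\delta_G(X)$, let $\sigma(e)$ be an endpoint attaining $w(e)=\min\{c(u),c(v)\}$, breaking ties arbitrarily. Set
\[
S_X=\{\sigma(e): e\in\delta_G(X)\}.
\]
This is clearly computable in polynomial time.

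The first step is to show $S_X\subseteq D$. Here the hypothesis $w(\delta_G(X))<M$ is used. Since all costs are nonnegative, every single crossing edge satisfies $w(e)\le w(\delta_G(X))<M$. If $\sigma(e)$ were $s$ or a terminal, it would have cost $M$, and then $w(e)=c(\sigma(e))=M$, a contradiction. Hence $\sigma(e)\in D$ for every crossing edge. In particular, no terminal and not $s$ is deleted.

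The second step is feasibility, which I expect to be the only delicate point. I will show that no terminal $t\in X\cap T$ is connected to $s$ in $G-S_X$. Take any $s$--$t$ path $P$ in $G$. Since $s\notin X$ and $t\in X$, the path $P$ uses some edge $e\in\delta_G(X)$. The endpoint $\sigma(e)$ lies on $P$ and belongs to $S_X$. It is neither $s$ nor $t$, since both are undeletable by the first step. So $P$ is blocked in $G-S_X$. Because $t$ itself survives the deletion, $t$ is a vertex of $G-S_X$ disconnected from $s$. This gives at least $|X\cap T|\ge k$ disconnected terminals. Equivalently, letting $X'$ be the union of the components of $G-S_X$ not containing $s$, we get a set with $N_G(X')\subseteq S_X$ witnessing feasibility in the sense of the \TMNC{} definition; if an exact boundary is wanted, one can replace $S_X$ by $N_G(X')$, which only lowers the cost.

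The final step is the cost bound. Since $\sigma$ maps $\delta_G(X)$ onto $S_X$,
\[
c(S_X)\le\sum_{e\in\delta_G(X)}c(\sigma(e))=\sum_{e\in\delta_G(X)}w(e)=w(\delta_G(X)).
\]
Overcounting only helps when several edges share a chosen endpoint, so no degree factor appears in this direction.
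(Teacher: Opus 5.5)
Your proposal is correct and follows essentially the same route as the paper: for each crossing edge, select a cheapest endpoint. Then use $w(\delta_G(X))<M$ to rule out undeletable endpoints, and charge each selected vertex to a crossing edge to obtain $c(S_X)\le w(\delta_G(X))$. Taking the argmin of $w(e)$ first and only then showing it lies in $D$ directly justifies $c(\sigma(e))=w(e)$, which is slightly tidier than the paper; your explicit path-blocking argument, with the optional passage to $N_G(X')$, also makes the feasibility step fully rigorous.
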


\begin{proof}
Since
\[
w(\delta_G(X))<M,
\]
no edge in $\delta_G(X)$ can have both endpoints undeletable. Indeed, if an
edge had both endpoints in $\{s\}\cup T$, then its edge cost would be $M$,
forcing $w(\delta_G(X))\ge M$, a contradiction.

Therefore every crossing edge $e\in\delta_G(X)$ has at least one finite-cost
deletable endpoint. For every edge $e=\{u,v\}\in\delta_G(X)$, choose a deletable
endpoint of minimum cost and add it to $S_X$. Since
\[
w(e)=\min\{c(u),c(v)\},
\]
we have
\[
c(S_X)
\le
\sum_{e\in\delta_G(X)} w(e)
=
w(\delta_G(X)).
\]

The set $S_X$ hits every edge crossing between $X$ and $V\setminus X$. Hence,
after deleting $S_X$, no vertex of $X\setminus S_X$ can be connected to $s$
through a crossing edge. Moreover, $S_X\subseteq D$, so no terminal is deleted.
Since $|X\cap T|\ge k$, at least $k$ terminals are disconnected from $s$ in
$G-S_X$. Thus $S_X$ is a feasible \TMNC{} solution.
\end{proof}

\begin{theorem}[Bounded-degree planar \TMNC]
\label{thm:bounded-degree-planar-tmnc}
Let $G=(V,E)$ be a planar graph with nonnegative vertex deletion costs. Let
$s$ be an undeletable root, let $T$ be a set of undeletable terminals, and let
$k$ be a quota. Suppose the \TMNC{} instance has a finite feasible solution and
every deletable vertex has degree at most $\Delta$. Then planar \TMNC{} admits
a polynomial-time $2\Delta$-approximation.
\end{theorem}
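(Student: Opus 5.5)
The plan is to combine Lemma~\ref{lem:node-to-edge-bounded}, Theorem~\ref{thm:planar-tmec-2approx}, and Lemma~\ref{lem:edge-to-node-cover} in sequence. First build the derived planar \TMEC{} instance on the same graph $G$, with edge costs $w(e)=\min\{c(u),c(v)\}$, where $c(s)=c(t)=M$ for every $t\in T$ and $M>2\Delta C_{\mathrm{all}}$. The graph is unchanged, so it stays planar, and the costs are computable in polynomial time. Next run the factor-$2$ algorithm of Theorem~\ref{thm:planar-tmec-2approx} on this instance. It returns a set $X$ with $s\notin X$ and $|X\cap T|\ge k$, and the proof of that theorem gives
\[
w(\delta_G(X))\le 2\OPT_E^w .
\]
By Lemma~\ref{lem:node-to-edge-bounded}, $\OPT_E^w\le\Delta\OPT_N$, so $w(\delta_G(X))\le 2\Delta\OPT_N$.

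To apply Lemma~\ref{lem:edge-to-node-cover} we need $w(\delta_G(X))<M$. This is the step I expect to be the main obstacle, because it relies on the finite-feasibility assumption. A finite feasible solution uses only vertices of $D$, so $\OPT_N\le C_{\mathrm{all}}$. Then
\[
w(\delta_G(X))\le 2\Delta\OPT_N\le 2\Delta C_{\mathrm{all}}<M .
\]
Lemma~\ref{lem:edge-to-node-cover} then gives, in polynomial time, a set $S_X\subseteq D$ with $c(S_X)\le w(\delta_G(X))\le 2\Delta\OPT_N$ that disconnects at least $k$ terminals from $s$. Hence $S_X$ is feasible and achieves ratio $2\Delta$.

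Two points still need checking. The first is that terminals in $X$ are not in $S_X$, since $S_X\subseteq D$. They therefore survive in $G-S_X$, and every path from $s$ to them crosses $\delta_G(X)$, which $S_X$ hits. The second concerns the exact version of Theorem~\ref{thm:planar-tmec-2approx}. That theorem uses balanced cut with unary vertex weights, and the vertex weights are the same as before ($2m$, $1$, $0$). The large value $M$ appears only in the edge costs, and Theorem~\ref{thm:gsv-balanced-cut} allows arbitrary nonnegative edge costs, so no size blow-up arises. If one prefers to avoid the numerical value $M$ altogether, edges with both endpoints undeletable can be given cost $M$ symbolically, or the cost can be handled by a bit-length argument; $M$ needs only polynomially many bits.
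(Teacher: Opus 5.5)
Your proposal is correct and follows the paper's proof essentially step for step. Like the paper, you build the derived edge-cost instance on the same planar graph, apply the factor-$2$ planar \TMEC{} algorithm, bound $\OPT_E^w\le\Delta\OPT_N$ by Lemma~\ref{lem:node-to-edge-bounded}, use finite feasibility ($\OPT_N\le C_{\mathrm{all}}$) to get $w(\delta_G(X))<M$, and convert back to a node separator via Lemma~\ref{lem:edge-to-node-cover}. Your two extra checks are sound and already implicit in the paper: terminals in $X$ survive because $S_X\subseteq D$, and $M$ enters only the edge costs, so the unary vertex-weight requirement is unaffected.
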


\begin{proof}
Construct the derived planar \TMEC{} instance with edge costs
\[
w(\{u,v\})=\min\{c(u),c(v)\},
\]
where $c(s)=M$ and $c(t)=M$ for all $t\in T$, with
\[
M>2\Delta C_{\mathrm{all}}.
\]

By Theorem~\ref{thm:planar-tmec-2approx}, we can find a set $X\subseteq V$
such that
\[
s\notin X,
\qquad
|X\cap T|\ge k,
\]
and
\[
w(\delta_G(X))\le 2\OPT_E^w.
\]
By Lemma~\ref{lem:node-to-edge-bounded},
\[
\OPT_E^w\le \Delta\OPT_N.
\]
Therefore
\[
w(\delta_G(X))
\le
2\Delta\OPT_N.
\]
Since the instance has a finite feasible solution,
\[
\OPT_N\le C_{\mathrm{all}}.
\]
Thus
\[
w(\delta_G(X))
\le
2\Delta C_{\mathrm{all}}
<
M.
\]
Hence Lemma~\ref{lem:edge-to-node-cover} applies. It constructs a node
separator $S_X\subseteq D$ such that at least $k$ terminals are disconnected
from $s$ and
\[
c(S_X)\le w(\delta_G(X)).
\]
Consequently,
\[
c(S_X)
\le
w(\delta_G(X))
\le
2\Delta\OPT_N.
\]
Thus $S_X$ is a feasible \TMNC{} solution of cost at most
$2\Delta\OPT_N$.
\end{proof}

\section{Conclusion}

We introduced and analyzed threshold minimum cut problems with rooted terminal
quotas. For general undirected graphs, we gave an exact-feasibility
$O(\log n)$ approximation for threshold minimum edge cut using a cut-dominating
tree decomposition and an exact tree dynamic program. For planar graphs, we
proved a factor-$2$ approximation for threshold minimum edge cut by reducing the
quota condition to planar weighted balanced cut. Finally, for bounded-degree
planar graphs, we proved a $2\Delta$-approximation for threshold minimum node
cut by converting vertex deletion costs into edge costs and then applying the
planar edge-cut result.

The results leave several natural directions open. The most direct one is to
remove the factor $\Delta$ in the planar node-cut approximation, which would
likely require a direct approximation for planar balanced vertex separator or a
more faithful transformation from vertex costs to planar edge cuts. A second
direction is to improve the exact-quota general-graph edge-cut ratio below
$O(\log n)$. Existing small-set-expansion methods suggest better bicriteria
bounds, but exact preservation of the quota $|X\cap T|\ge k$ remains the key
technical obstacle.

\bibliographystyle{plain}
\bibliography{main}

\end{document}